\newenvironment{proof}[1][{}]{\par\noindent\bgroup{\bf Proof#1. }}{\egroup\hspace*{\fill}\fertig}
\renewcommand{\o}[1]{\overline{{#1}}}
\renewcommand{\t}[1]{\widetilde{{#1}}}
\renewcommand{\div}{div}
\newtheorem{theorem}{Theorem}[section]
\newtheorem{lemma}[theorem]{Lemma}
\newtheorem{proposition}[theorem]{Proposition}
\def\fertig{{\unskip\nobreak\hfil\penalty 50
   \hskip2em\hbox{}\nobreak\hfill$\square$\medskip}}
\newcommand{\p}{\varphi}
\newcommand{\la}{\lambda}
\renewcommand{\to}{\longrightarrow}
\renewcommand{\O}{\Omega}
\renewcommand{\a}{\alpha}
\newcommand{\D}{\nabla}
\def\diag{\text{diag }}
\def\R{\mathbb{R}}
\def\NN{\mathbb{N}}
\def\SS{\mathbb{S}}
\def\DD{\mathcal{D}}
\def\W{\mathcal{W}}
\def\D{\nabla}
\def\a{\alpha}
\DeclareMathOperator{\Tr}{Tr}
\DeclareMathOperator{\sym}{sym}
\DeclareMathOperator{\cof}{cof}
\begin{document}

\title{A remark on constrained von K\'arm\'an theories}

\author{Peter Hornung
\footnote{
Email: hornung@iam.uni-bonn.de
}
}

\date{}

\maketitle

\begin{abstract}
We derive the Euler-Lagrange equation corresponding
to a variant of `non-Euclidean' constrained von K\'arm\'an theories. 
\end{abstract}

\section{Introduction}

F\"oppl-von-K\'arm\'an theories arise as asymptotic theories modelling the
behaviour of thin elastic films, in an energy regime allowing only for very small deformations.
The elastic energy of such deformations (with respect to the thickness of the film) is therefore much lower than
that of generic nonlinear bending deformations. The asymptotic behaviour of the latter
is modelled by the fully nonlinear Kirchhoff plate theory.
We refer to \cite{fjm1, fjm2} for a derivation and thorough discussion of these theories, cf. also
\cite{Ciarlet2}.
\\
More precisely, the asymptotic behaviour of thin film 
deformations whose elastic energy lies in a regime just below
the nonlinear bending regime is captured by so-called constrained von K\'arm\'an theories,
cf. \cite{fjm2}. Their behaviour is essentially fully described by their out-of-plane displacement 
$v : S\to\R$, where $S\subset\R^2$ is the reference configuration of the sample.
The asymptotic elastic energy of such a displacement $v$ is then given by
\begin{equation}\label{null} 
\frac{1}{24}\int_S Q_2\left( \D^2 v(x) \right)\ dx + \int_S f\cdot v\ dx,
\end{equation}
subject to the constraint
\begin{equation}\label{MA} 
\det(\D^2 v) = 0.
\end{equation}
Here $Q_2$ is the quadratic form of linearised elasticity and $f$ models
applied forces.
\\
Motivated by applications in non-Euclidean (or pre-strained) elasticity (cf. e.g.
\cite{EfratiSharonKupferman} and \cite{LeMaPa}), 
we consider variants of functionals as in \eqref{null} by allowing a nonzero right-hand side 
in \eqref{MA}. For simplicity, we
restrict to the isotropic case when $Q_2 = |\cdot|^2$ and we do not consider forces. More general
situations can be handled in the same way, as our main focus is on the constraint 
\begin{equation}
\label{MA-1}
\det\D^2 v = k 
\end{equation}
itself. The problem is therefore to understand, 
on a bounded domain $S\subset\R^{2}$, and for given $k : S\to\R$,
the functional
$$
\W_k(v) = 
\begin{cases}
\int_S |\D^{2} v(x)|^{2}\ dx &\mbox{ if }v\in W_k^{2,2}(S)
\\
+ \infty &\mbox{ otherwise.}
\end{cases}
$$
Here
$$
W^{2,2}_k = \{v\in W^{2,2}(S) : \det\D^2 v = k \mbox{ pointwise almost everywhere } \}.
$$
We use a similar notation for other function spaces, such as $C^{2,\a}_k(\o S)$ or $C^2_k(\o S)$. 
The Monge-Amp\`ere equation $\det\D^{2} v = k$ 
has been studied extensively over the last decades. We refer to the book
\cite{GT} for a list of references on the topic.
\\
The functionals $\W_k$ are scalar variants of the functionals studied
in \cite{H-general, H-convex}. The purpose of this note is to show how the approach developed 
in those papers can be adapted to the simpler situation considered here. In passing, we provide
here a classical functional analytic framework for this sort of problems.
Our main focus is on the elliptic case ($k > 0$), which is the simplest one.
The methods are, therefore, very basic. Indeed, in this situation, soft arguments readily yield the desired Euler-Lagrange equation.
\\
At the end of the note we discuss the cases when $k$ is constant.

\section{Main results}

For simplicity, we assume throughout this note that $S\subset\R^2$ is a simply connected, bounded domain with a smooth boundary,
and we let $k\in C^{\infty}(\o S)$.

\subsection{Existence of minimisers}

As in \cite{H-general}, existence of minimisers can be proven by a robust and straightforward
argument.
\begin{proposition}
\label{existence} 
The functional $\W_k$
attains a minimum in the space $W_k^{2,2}(S)$.
\end{proposition}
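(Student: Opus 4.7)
The plan is to apply the direct method of the calculus of variations. The only nontrivial point is that the nonlinear constraint $\det \D^2 v = k$ must pass to the weak $W^{2,2}$ limit; in two dimensions this works because $\det\D^2 v$ is a null Lagrangian and can be rewritten in divergence form with coefficients that are quadratic in $\nabla v$.

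Assuming $W_k^{2,2}(S)$ is nonempty (otherwise the statement is vacuous), I would take a minimising sequence $v_n \in W_k^{2,2}(S)$. Since both the energy $\int_S|\D^2 v|^2\,dx$ and the constraint $\det\D^2 v = k$ are invariant under the addition of affine functions, I may subtract from each $v_n$ an affine function so that $\int_S v_n\,dx = 0$ and $\int_S \D v_n\,dx = 0$. Two applications of the Poincar\'e--Wirtinger inequality on the smooth bounded domain $S$ then yield a uniform bound for $\|v_n\|_{W^{2,2}(S)}$ in terms of $\|\D^2 v_n\|_{L^2(S)}$, and the latter is bounded because $(v_n)$ is minimising.

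By reflexivity, a subsequence satisfies $v_n \weak v$ in $W^{2,2}(S)$; by Rellich--Kondrachov, $v_n\to v$ strongly in $W^{1,p}(S)$ for every $p<\infty$. To see that $v$ still satisfies the constraint, I would use the identity
\begin{equation*}
\det \D^2 v \;=\; \d_2\bigl(\d_2 v\,\d_{11} v\bigr)\;-\;\d_1\bigl(\d_2 v\,\d_{12} v\bigr),
\end{equation*}
valid in $\DD'(S)$ for every $v\in W^{2,2}(S)$. Since $\d_2 v_n \to \d_2 v$ strongly in $L^q(S)$ for every finite $q$ while $\d_{11} v_n,\d_{12}v_n \weak \d_{11}v,\d_{12}v$ in $L^2(S)$, the products converge in $L^1(S)$, and so $\det \D^2 v_n \to \det \D^2 v$ in $\DD'(S)$. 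The left-hand side equals $k$ for every $n$, hence $\det\D^2 v = k$ as a distribution, and because $k$ is continuous the identity also holds pointwise almost everywhere. Therefore $v\in W^{2,2}_k(S)$.

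Finally, the weak lower semicontinuity of the convex functional $w\mapsto \int_S|\D^2 w|^2\,dx$ on $W^{2,2}(S)$ gives
\begin{equation*}
\W_k(v) \;\leq\; \liminf_{n\to\infty}\W_k(v_n) \;=\; \inf_{W^{2,2}_k(S)} \W_k,
\end{equation*}
so $v$ is a minimiser. The main (and really only) subtle step is the passage to the limit in the Monge--Amp\`ere constraint; everything else is standard. The argument is robust because it uses only the divergence structure of $\det \D^2$ in two dimensions, and, crucially, no sign assumption on $k$.
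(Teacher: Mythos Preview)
Your argument is correct and is essentially the same direct-method proof as in the paper: normalise by affine functions, use Poincar\'e to get coercivity, extract a weak $W^{2,2}$ limit, and pass the constraint to the limit via weak continuity of $\det\D^2$. You supply more detail than the paper on this last step (the null-Lagrangian/divergence structure), though strictly speaking the products $\d_2 v_n\,\d_{ij}v_n$ converge only weakly in $L^1$, not strongly --- but that is all you need for the distributional conclusion you draw.
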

\begin{proof}
We only need to consider the case when the infimum of $\W_k$ is finite.
But then the result is a straightforward application of the direct method on the space
$$
X = \{v\in W^{2, 2} : \int_S v = 0 \mbox{ and }\int_S \D v = 0 \}.
$$
In fact, $\W_k$ is obviously $W^{2,2}$-coercive and lower semicontinuous under weak $W^{2,2}$-convergence.
But the constraint is stable under weak $W^{2,2}$-convergence, because
the determinant is continuous under weak $W^{2,2}$-convergence. Applying Poincar\'e's inequality,
we obtain the existence of a minimiser in $X$.
\end{proof}

It is clear that the same proof also works for general domain dimensions, other energy densities,
additional force terms, boundary conditions, etc.
As in \cite{H-convex}, when $k > 0$ then one has a better existence result: 
\begin{proposition}
\label{ellex} 
Assume that $k > 0$ on $\o S$. Then the functional
$\W_k$ attains a minimum on the set
\begin{equation}
\label{ellex-1}
 W^{2,2}_k(S)\cap C^{\infty}(S).
\end{equation}
\end{proposition}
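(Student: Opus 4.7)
The plan is to combine Proposition \ref{existence} with classical interior regularity theory for the Monge-Amp\`ere equation in the elliptic regime. By Proposition \ref{existence} there exists a minimiser $v\in W^{2,2}_k(S)$ of $\W_k$ over the whole space $W^{2,2}_k(S)$. Since $W^{2,2}_k(S)\cap C^{\infty}(S)\subset W^{2,2}_k(S)$, it suffices to prove that this particular $v$ lies in $C^{\infty}(S)$; it will then automatically realise the minimum on the smaller set as well.

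Before invoking any PDE regularity theory, I would first argue that, up to replacing $v$ by $-v$, one may assume $v$ to be convex on $S$. In two dimensions one has $\det\D^2(-v)=\det\D^2 v$ and $|\D^2(-v)|^2=|\D^2 v|^2$, so this substitution preserves both the energy and the constraint. Convexity itself is forced by the fact that $\det\D^2 v = k > 0$ almost everywhere makes $\D^2 v$ definite a.e., and on the connected domain $S$ this definiteness must be of the same sign almost everywhere. The rigidity of this sign for a $W^{2,2}$ solution of $\det\D^2 v = k > 0$ on a connected domain is, in my view, the main obstacle in the proof: unlike in the smooth case, where continuity and connectedness conclude immediately, here one has only an $L^2$ Hessian, so a careful measure-theoretic argument (exploiting that $\det\D^2 v$ is bounded away from $0$) is needed to rule out that both the positive- and negative-definite sets carry positive measure.

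Once $v$ is known to be convex and to satisfy $\det\D^2 v = k\in C^{\infty}(\o S)$ with $k > 0$, one lands in the setting of classical interior regularity theory for the Monge-Amp\`ere equation: $v$ is in particular an Aleksandrov solution with smooth positive right-hand side, so Caffarelli's $C^{2,\a}$ interior estimates apply and give $v\in C^{2,\a}_{\mathrm{loc}}(S)$. Schauder bootstrap on the linearised operator $w\mapsto \cof(\D^2 v):\D^2 w$, which is uniformly elliptic thanks to the convexity of $v$ and the positive lower bound on $k$, then upgrades the regularity step by step to $v\in C^{\infty}(S)$. These regularity results can be quoted from \cite{GT}. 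Modulo the sign-rigidity step, the whole proof is thus a direct application of standard elliptic PDE theory.
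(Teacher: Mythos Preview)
Your plan is coherent, but it diverges from the paper's argument precisely at the point you yourself flag as the main obstacle, and the paper's route is designed to avoid that obstacle altogether.

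You start from a minimiser over all of $W^{2,2}_k(S)$ obtained via Proposition~\ref{existence} and then try to show it is convex (or concave) in order to feed it into interior Monge--Amp\`ere regularity. The sign--rigidity step --- that a $W^{2,2}$ solution of $\det\D^2 v=k>0$ on a connected domain must have a definite Hessian of one fixed sign a.e. --- is genuinely delicate at this regularity (the Hessian is merely $L^2$, so the continuity/connectedness argument is unavailable), and it is certainly not something one can quote from \cite{GT}. Without it your argument does not close: if there existed $v\in W^{2,2}_k(S)$ that are neither convex nor concave, the minimiser from Proposition~\ref{existence} might sit outside $C^\infty(S)$, and there would be no reason for it to realise the infimum over $W^{2,2}_k(S)\cap C^\infty(S)$.

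The paper sidesteps this entirely. It first observes that on the \emph{target} set $W^{2,2}_k(S)\cap C^\infty(S)$ the sign--rigidity is trivial (smooth Hessian, connected domain), so this set splits into a convex and a concave component with the same infimum. It then runs the direct method not on all of $W^{2,2}_k(S)$ but only on
\[
X=\{v\in W^{2,2}_k(S):v\text{ convex}\},
\]
which is closed under weak $W^{2,2}$-convergence. Interior regularity for \emph{convex} Monge--Amp\`ere solutions gives $X=\{v\in W^{2,2}_k(S)\cap C^\infty(S):v\text{ convex}\}$, so the minimiser found in $X$ lies in the desired class. No claim about general $W^{2,2}_k$ solutions is ever needed.

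In short: your approach would work \emph{if} the $W^{2,2}$ sign--rigidity you describe is established, but that is an additional, nontrivial ingredient you have not supplied. The paper's approach buys exactly the elimination of that ingredient, by restricting to the convex class before applying the direct method rather than after.
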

\begin{proof}
First note that functions $v$ belonging to the set \eqref{ellex-1}
are either uniformly convex or uniformly concave, and that the infimum
of $\W_k$ is the same on both of these components of the set \eqref{ellex-1}. So 
we will prove that the minimimum is attained on the set
$$
X = \{v\in W^{2,2}_k(S)\cap C^{\infty}(S) : v\mbox{ is convex }\}.
$$
In fact, by interior regularity for convex Monge-Amp\`ere equations we have
$$
X = \{v\in W_k^{2,2}(S) : v\mbox{ is convex }\}.
$$
Hence the space $X$ is closed under weak $W^{2, 2}$-convergence and therefore we
can find a minimiser in this space by the same arguments as in proof of Proposition \ref{existence}.
\end{proof}

\subsection{Lagrange multiplier rule for the elliptic case}

The formal Lagrange multiplier rule asserts that critical points of $\W_k$
are critical for the functional
\begin{equation}
\label{formal-1} 
v\mapsto \int_S |\D^2 v|^2 - \lambda\det\D^2 v
\end{equation} 
without additional constraints, cf. e.g. \cite{GuvenMuller} for a related situation. Here $\la$ is some Lagrange multiplier. The
Euler-Lagrange equation corresponding to \eqref{formal-1} is
$$
\div\div\left( \D^2 v - \la\cof\D^2 v \right) = 0,
$$
or, since $\div\cof\D = 0$,
$$
\Delta^2 v - \cof\D^2 v : \D^2\la = 0.
$$
We will show that, under suitable regularity assumptions, 
this formal Lagrange multiplier rule can be
justified by means of very soft functional analytic arguments.
\\
For a rigorous approach we introduce the following notions, which
are variants of those introduced in \cite{H-general}:
A function $v\in W^{2,2}_k(S)$ is said to be stationary for $\W_k$ if 
$$
\frac{d}{dt}\Big|_{t = 0} \int_S |\D^2 u(t)|^2 = 0
$$
for all (strongly $W^{2,2}$-continuous, say) 
maps $t\mapsto u(t)$ from a neighbourhood of zero in $\R$ into $W^{2,2}_k(S)$
such that $u(0) = v$ and such that the derivative $u'(0)$ exists.
\\
A function $v\in W^{2,2}_k(S)$ is said to be formally stationary for $\W_k$
if 
$$
\int_S \D^2 v : \D^2 h = 0 \mbox{ for all }h\in W^{2,2}(S)\mbox{ with }\cof\D^2 v : \D^2 h = 0\mbox{ a.e. in }S.
$$
Our main results for the elliptic case are the following two remarks.
\begin{proposition}\label{manif} 
Let $\a\in (0, 1)$ and let $k > 0$ on $\o S$. Then the set
$$
 C^{2,\a}_k(\o S) := \{u\in C^{2,\a}(\o S) : \det\D^2 u = k\mbox{ in }S \}
$$
is a $C^{\infty}$-submanifold of $C^{2,\a}(\o S).$
\end{proposition}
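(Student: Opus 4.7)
The plan is to realise $C^{2,\alpha}_k(\o S)$ as the zero set of a $C^\infty$ map between Banach spaces and verify the hypotheses of the (submersion form of the) implicit function theorem. Concretely, I would introduce
$$
F : C^{2,\a}(\o S)\to C^{0,\a}(\o S),\qquad F(u) := \det\D^2 u - k.
$$
Since $F$ is a polynomial (of degree $2$) in the entries of $\D^2 u$ and pointwise multiplication is a continuous bilinear map on $C^{0,\a}(\o S)$, the map $F$ is $C^\infty$ (in fact real-analytic). A direct calculation gives
$$
dF(u)[h] = \cof\D^2 u : \D^2 h\qquad\text{for all } h\in C^{2,\a}(\o S).
$$
To conclude, it suffices to check, at every $u\in F^{-1}(0)$, that $dF(u)$ is surjective and that $\ker dF(u)$ admits a topological complement in $C^{2,\a}(\o S)$.

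The decisive input is the ellipticity coming from $k>0$: if $u\in C^{2,\a}(\o S)$ and $\det\D^2 u = k>0$ on the connected set $\o S$, then $\D^2 u$ has two eigenvalues of the same sign at every point, so (by continuity) $\D^2 u$ is either uniformly positive definite or uniformly negative definite on $\o S$. For symmetric $2\times 2$ matrices $\cof\D^2 u$ has the same definiteness as $\D^2 u$, so the linear operator
$$
L := dF(u),\qquad Lh = \cof\D^2 u : \D^2 h,
$$
is a uniformly elliptic linear second-order operator with coefficients in $C^{0,\a}(\o S)$ and no lower-order terms. Classical Schauder theory for the Dirichlet problem then yields, for every $f\in C^{0,\a}(\o S)$, a unique solution $h\in C^{2,\a}(\o S)$ to $Lh = f$ in $S$ with $h=0$ on $\d S$. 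In particular $L$ is surjective, and the solution map $T : C^{0,\a}(\o S)\to C^{2,\a}(\o S)$, $Tf := h$, is a bounded right inverse of $L$.

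Given this right inverse, the decomposition $h = T(Lh) + (h-T(Lh))$ exhibits $C^{2,\a}(\o S) = \ran T \oplus \ker L$ as a topological direct sum, since $\ran T$ is closed (it is the image of $C^{0,\a}(\o S)$ under the bounded injection $T$, and $L\circ T = \mathrm{id}$) and the projection onto $\ker L$ along $\ran T$ is the bounded operator $\mathrm{id} - T\circ L$. Hence $dF(u)$ is a split surjection at every $u \in F^{-1}(0)$, and the implicit function theorem in Banach spaces provides local $C^\infty$ charts identifying $F^{-1}(0)$ near $u$ with an open subset of $\ker dF(u)$. This proves that $C^{2,\a}_k(\o S)$ is a $C^\infty$-submanifold of $C^{2,\a}(\o S)$, with tangent space at $u$ equal to $\ker dF(u)$, i.e.\ the space of $h\in C^{2,\a}(\o S)$ satisfying $\cof\D^2 u : \D^2 h = 0$ in $S$. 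The main technical step is the surjectivity/splitting argument; everything else is either algebraic (smoothness of $F$, computation of $dF$) or a direct sign argument based on $k>0$.
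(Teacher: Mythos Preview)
Your proposal is correct and follows essentially the same route as the paper: define $F(u)=\det\nabla^2 u - k$ as a smooth (quadratic) map $C^{2,\alpha}(\o S)\to C^{0,\alpha}(\o S)$, compute $dF(u)h=\cof\nabla^2 u:\nabla^2 h$, use $k>0$ to get uniform ellipticity of the linearisation, and then invoke Schauder theory for the Dirichlet problem to obtain surjectivity together with a bounded right inverse, so that the kernel splits and the implicit function theorem applies. Your explicit construction of the complementary projection $\mathrm{id}-T\circ L$ is exactly the mechanism behind the paper's Lemma on splitting (Lemma~\ref{split}), just spelled out in place rather than cited.
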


\begin{proposition}
\label{multiplier}
Let $\a\in (0, 1)$ and let $k > 0$ on $\o S$. If $v\in C_k^{2,\a}(\o S)$ is
stationary for $\W_k$, then there exists a unique Lagrange multiplier $\la\in (C^{\infty}\cap L^2)(S)$
such that
\begin{equation}\label{EL} 
\div\div \left( \chi_S \left( \D^2 v + \la\cof\D^2 v \right) \right) = 0\mbox{ in }\DD'(\R^2).
\end{equation}
In particular,
$$
\Delta^2 v + \cof\D^2 v : \D^2\la = 0 \mbox{ in the classical sense on }S.
$$
\end{proposition}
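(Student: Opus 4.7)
My approach is a standard Lagrange multiplier argument applied to the linearised constraint operator $M:W^{2,2}(S)\to L^2(S)$, $Mh:=\cof\D^2 v:\D^2 h$. Since $k>0$ on $\o S$ and $v\in C^2(\o S)$, the function $v$ is uniformly convex or concave on $\o S$, so $\cof\D^2 v\in C^{0,\a}(\o S)$ is uniformly definite; hence $M$ is a uniformly elliptic second-order operator with H\"older continuous coefficients, which drives the whole analysis.

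\emph{Step 1 and the $W^{2,2}$ tangent space.}
By Proposition~\ref{manif}, $C^{2,\a}_k(\o S)$ is a $C^\infty$-submanifold of $C^{2,\a}(\o S)$ with tangent space $T_v=\{h\in C^{2,\a}(\o S): Mh=0\}$, so for any $h\in T_v$ there is a smooth curve $u:(-\e,\e)\to C^{2,\a}_k(\o S)$ with $u(0)=v$ and $u'(0)=h$. Via $C^{2,\a}(\o S)\embed W^{2,2}(S)$ this curve is $W^{2,2}$-continuous into $W^{2,2}_k(S)$, so stationarity yields $\int_S\D^2 v:\D^2 h\,dx = 0$ for every $h\in T_v$. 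I next extend this to all $h\in W^{2,2}(S)$ with $Mh=0$ by density: approximate $h$ in $W^{2,2}(S)$ by $\tilde h_n\in C^\infty(\o S)$ and solve the Dirichlet problem $Mw_n=-M\tilde h_n$, $w_n|_{\d S}=0$. The classical $L^p$-theory for non-divergence form elliptic operators with continuous coefficients (cf.~\cite{GT}) yields a unique $w_n\in W^{2,2}(S)\cap W^{1,2}_0(S)$ with $\|w_n\|_{W^{2,2}}\le C\|M\tilde h_n\|_{L^2}\to 0$, and Schauder theory up to the boundary upgrades $w_n$ to $C^{2,\a}(\o S)$. Then $h_n:=\tilde h_n+w_n\in T_v$ and $h_n\to h$ in $W^{2,2}(S)$, so the identity of Step~1 passes to the limit by continuity.

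\emph{Multiplier, regularity and the classical equation.}
The same Dirichlet solvability for arbitrary $f\in L^2$ shows $M$ is surjective, and the closed range theorem then gives $M^*$ injective with range equal to $(\ker M)^\circ$. The continuous functional $L(h)=\int_S\D^2 v:\D^2 h$ lies in $(\ker M)^\circ$ by the previous step, producing a unique $\la\in L^2(S)$ with $L=-M^*\la$, i.e.
$$
\int_S\bigl(\D^2 v + \la\,\cof\D^2 v\bigr):\D^2 h\,dx = 0\quad\text{for all }h\in W^{2,2}(S),
$$
which, tested against $h\in C_c^\infty(\R^2)$, is precisely~\eqref{EL}. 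Interior regularity for the elliptic Monge--Amp\`ere equation gives $v\in C^\infty(S)$, so $\cof\D^2 v\in C^\infty(S)$ is row-wise divergence-free by the Piola identity; two integrations by parts then show that $M$ is formally self-adjoint on $C^\infty_c(S)$, and testing the identity above against $h\in C_c^\infty(S)$ gives $\Delta^2 v + \cof\D^2 v:\D^2\la = 0$ in $\DD'(S)$. This is a uniformly elliptic equation for $\la$ with $C^\infty$ coefficients and $C^\infty$ right-hand side, so $\la\in C^\infty(S)$ by Schauder bootstrapping.

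The main hurdle is the density step: one has to bridge the gap between the $C^{2,\a}$-admissible variations supplied by Proposition~\ref{manif} and the larger class of $W^{2,2}$-admissible variations required for the closed range argument, which is where the surjectivity and the $W^{2,2}$-estimate for the Dirichlet problem of $M$ do most of the work. Everything else is either soft functional analysis (closed range theorem) or standard interior elliptic regularity.
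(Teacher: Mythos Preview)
Your proof is correct and follows essentially the same route as the paper's. The paper organises the argument into three lemmas --- a density lemma (approximate $h\in\ker M$ in $W^{2,2}$ by $C^{2,\a}$ functions and correct via the Dirichlet problem for $M$), a lemma passing from stationarity to formal stationarity via Proposition~\ref{manif} and this density, and a Lagrange multiplier lemma using surjectivity of $M$ --- and then concludes with interior elliptic regularity for $\la$; your Steps~1--2 reproduce exactly this chain, with the closed range theorem playing the role of the paper's abstract multiplier lemma.
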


\section{The elliptic case}

\subsection{Functional analysis background}

In this section, $X$ and $Y$ denote real Banach spaces.
Recall that a closed subspace $E$ of $X$ is said to split $X$ if $E$
has a closed complement, i.e., there exists a closed subspace $F$ of $X$ such that
$X = E\oplus F$. For a linear operator $G : X\to Y$ we denote by $N(G)$ its kernel and by $R(G)$ its range.
The proof of the following result is straightforward.

\begin{lemma}\label{basiclagrange} 
Let $G : X\to Y$ and $F : X\to\R$ be bounded linear operators, and assume that the range of $G$ is closed.
Then 
$$
Fh = 0\mbox{ for all $h\in X$ with }Gh = 0
$$
if and only if there exists $\Lambda\in Y'$ such that 
$F = \Lambda\circ G$. If, moreover, $R(G) = Y$ then $\Lambda$ is unique.
\end{lemma}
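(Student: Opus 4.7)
The plan is the classical quotient-plus-Hahn-Banach factorisation, with closedness of $R(G)$ ensuring that everything stays in the Banach-space category. The ``if'' direction is immediate: if $F = \Lambda\circ G$, then $Gh=0$ forces $Fh = \Lambda(0)=0$.

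For the converse, I would define a candidate multiplier $\Lambda_0 : R(G)\to\R$ by setting $\Lambda_0(Gh) := Fh$. Well-definedness is exactly the hypothesis: $Gh_1=Gh_2$ implies $h_1-h_2\in N(G)$, and hence $Fh_1=Fh_2$; linearity is automatic. To show that $\Lambda_0$ is bounded, I would pass to the quotient $X/N(G)$, which is a Banach space because $N(G)$ is closed. The map $G$ descends to a continuous linear bijection $\widetilde{G} : X/N(G)\to R(G)$; since $R(G)$ is closed in $Y$, it is itself a Banach space, so the bounded inverse theorem applied to $\widetilde{G}$ yields that $\widetilde{G}^{-1}$ is continuous. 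Because $F$ vanishes on $N(G)$, it descends to a bounded functional $\widetilde{F} : X/N(G)\to\R$, and $\Lambda_0 = \widetilde{F}\circ\widetilde{G}^{-1}$ is then a bounded linear functional on $R(G)$. By the Hahn-Banach theorem, $\Lambda_0$ extends to some $\Lambda\in Y'$, and $\Lambda\circ G = F$ by construction.

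For the uniqueness claim, suppose $R(G)=Y$ and $\Lambda_1,\Lambda_2\in Y'$ both satisfy $F=\Lambda_i\circ G$; then $\Lambda_1-\Lambda_2$ vanishes on $R(G)=Y$, hence is zero. The only step requiring genuine work is the continuity of $\Lambda_0$, and this is precisely where the closedness hypothesis on $R(G)$ enters, via the bounded inverse theorem; the rest is bookkeeping.
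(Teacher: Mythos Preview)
Your argument is correct and is the standard route to this result. The paper does not actually give a proof; it only remarks that the proof ``is straightforward,'' so there is nothing further to compare.
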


Let $M\subset X$.
A vector $h\in X$ is called a tangent vector to $M$ at $v\in M$ provided there exists
a map $u$ from a neighbourhood of zero in $\R$ into $M$ such that $u(0) = v$,
and such that the derivative $u'(0)$ at $0$ exists and equals $h$. We denote
the set of all tangent vectors $h\in X$ at $v$ by $T_v M$.
We recall the following basic result.
\begin{lemma}\label{lju} 
Let $G : X\to Y$ be continuously Fr\'echet differentiable on $X$ and set
\begin{equation}
\label{lju-1} 
 M = \{ u\in X : G(u) = 0\}.
\end{equation}
Assume that, for all $v\in M$, the derivative $G'(v) : X\to Y$ is surjective
and the kernel $N\left( G'(v) \right)$ splits $X$.
Then $M$ is a $C^1$-manifold.
\\
More precisely, for all $v\in M$ we have $T_v M = N\left( G'(v) \right)$ and 
there exists a continuously Fr\'echet differentiable homeomorphism 
$\p$ from a neighbourhood of zero in $T_vM$ onto an
open neighbourhood of $v$ in $M$ that satisfies
$$
\p(h) = v + h + o\left( \|h\|_X \right)\mbox{ as }h\to 0\mbox{ in }T_v M.
$$
If $G$ is $C^m$ on $X$ then $M$ is a $C^m$-manifold.
\end{lemma}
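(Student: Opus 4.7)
The plan is the standard Lyusternik-type argument via the Banach-space implicit function theorem. Fix $v\in M$ and write $N := N\left( G'(v) \right)$. By hypothesis there is a closed complement $F$ of $N$, so $X = N\oplus F$ with both summands Banach spaces under the induced norms; denote the associated continuous projections by $\pi_N$ and $\pi_F$.

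\textbf{Step 1: Isomorphism on the complement.} The restriction $A := G'(v)|_F : F\to Y$ is a bounded linear map, injective (since its kernel is $N\cap F = \{0\}$) and surjective (since $G'(v)$ is surjective and $G'(v)(N)=0$). The open mapping theorem therefore yields that $A$ is a topological isomorphism.

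\textbf{Step 2: Implicit function theorem.} Define
$$
\Phi : N\times F\to N\times Y, \qquad \Phi(h, f) := \bigl( h,\ G(v+h+f) \bigr).
$$
Then $\Phi$ is $C^m$ on a neighbourhood of $(0,0)$ (with the same regularity as $G$), $\Phi(0,0) = (0, 0)$, and the partial derivative in $f$ at $(0,0)$ equals $A$, which is invertible. The Banach-space implicit function theorem thus produces an open neighbourhood $U$ of $0$ in $N$ and a $C^m$ map $\eta : U\to F$ with $\eta(0)=0$ such that, on a neighbourhood of $(0,0)$,
$$
G(v+h+f) = 0 \iff f = \eta(h).
$$
Moreover, differentiating $G(v+h+\eta(h))=0$ at $h=0$ gives $G'(v)\bigl( h + \eta'(0)h \bigr)=0$, and since $\eta'(0)h \in F$ and $A$ is injective on $F$, we conclude $\eta'(0) = 0$.

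\textbf{Step 3: The chart and tangent space.} Set $\p(h) := v + h + \eta(h)$ for $h\in U$. By construction $\p$ is a $C^m$ homeomorphism from $U$ onto an open neighbourhood of $v$ in $M$ (its inverse being $w\mapsto \pi_N(w-v)$, restricted appropriately), and $\eta'(0)=0$ yields
$$
\p(h) = v + h + o\left(\|h\|_X\right) \text{ as } h\to 0 \text{ in }N.
$$
This exhibits $M$ as a $C^m$-manifold modelled on $N$. Finally, the inclusion $N \subseteq T_v M$ follows from the curves $t\mapsto \p(th)$, while the reverse inclusion follows by differentiating $G\bigl( u(t) \bigr)=0$ at $t=0$ for any admissible curve $u$, giving $G'(v) u'(0)=0$, i.e.\ $u'(0)\in N$.

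The only genuine ingredients are the open mapping theorem (to make $G'(v)|_F$ invertible) and the Banach-space implicit function theorem; the splitting hypothesis is used precisely to put the problem in the product form required by the latter. No single step is an obstacle, the mild care needed is in verifying that $\eta'(0)=0$ and in identifying $T_v M$ with $N$.
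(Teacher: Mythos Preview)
Your proof is correct and follows essentially the same route as the paper: both apply the Banach-space implicit function theorem to $H(h,z)=G(v+h+z)$ on the splitting $N(G'(v))\oplus E$, using that $D_2H(0,0)=G'(v)|_E$ is an isomorphism and that $D_1H(0,0)=0$ forces $\eta'(0)=0$. Your version simply fills in details (the open mapping argument, the explicit inverse of $\p$, both inclusions for $T_vM$) that the paper leaves implicit in its brief sketch.
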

\begin{proof}
This result is classical, cf. \cite{ljusternik}.
For the reader's convenience we recall the proof of the existence of $\p$.
\\
Let $E\subset X$ be a closed complement of $N\left( G'(v) \right)$ in $X$.
Define $H : N\left( G'(v) \right)\times E \to Y$ by setting $H(h,z) = G(v + h + z)$.
Since the partial derivative $D_2H(0, 0)$ is just the restriction of $G'(v)$ to $E$, 
the hypotheses show that we can apply the 
implicit function theorem to $H$. This yields a $C^1$-map $\p$
as in the statement, because $D_1H(0, 0) = 0$.
\end{proof}

{\bf Remark.} In the context of surfaces, the existence of $\p$ as in the conclusion of Lemma \ref{lju}
amounts to the so-called continuation of infinitesimal bendings. We refer to \cite{klimentov} for the
elliptic case (cf. also \cite{Nirenberg-Weyl} and \cite{LeMoPa}),
and to \cite{H-cpde} for the intrinsically flat case.

\subsection{Linear elliptic operators}

In this section we recall some basic functional analytic properties of 
linear elliptic operators of the form
$$
L u := A :\D^2 u + B\cdot\D u + Cu
$$
on a bounded $C^{2,\a}$ domain $\O\subset \R^n$ for some $\a\in (0, 1)$.
Here $A\in C^0(\o\O, \R^{n\times n}_{\sym})$, $B\in L^{\infty}(\O, \R^n)$ and $C\in L^{\infty}(\O)$.
We assume $A$ to be strictly elliptic, i.e.,
there exists $c > 0$ such that
$$
A(x):(\xi\otimes\xi) \geq c|\xi|^2\mbox{ for all }x\in S,\ \xi\in\R^n.
$$
For simplicity, we only consider the case when $C\leq 0$.

\begin{lemma}\label{splitwp}
Let $p\in (1, \infty)$, let $A\in C^0(\o\O)$, $B, C\in L^{\infty}(\O)$, assume that $C\leq 0$ 
and define $G : W^{2,p}(\O)\to L^p(\O)$ by
$
Gu = Lu.
$
Then $G$ is surjective and $N(G)$ splits $W^{2,p}(\O)$.
\end{lemma}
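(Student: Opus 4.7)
The plan is to reduce everything to the classical $L^p$ theory for the Dirichlet problem for $L$ on $\Omega$. Since $C \le 0$, the Aleksandrov--Bakel'man--Pucci weak maximum principle applies (cf. Gilbarg--Trudinger, Chapter~9), so any $u \in W^{2,p}(\Omega)\cap W^{1,p}_0(\Omega)$ with $Lu = 0$ vanishes identically. Combining this uniqueness statement with the global $W^{2,p}$ a priori estimate for $L$ (available because $A\in C^0(\o\Omega)$ and $\partial\Omega$ is $C^{2,\a}$) and the Fredholm alternative for second order elliptic operators, one obtains: for every $f \in L^p(\Omega)$ there is a unique solution $u \in W^{2,p}(\Omega)\cap W^{1,p}_0(\Omega)$ of
$$
Lu = f \text{ in }\Omega, \qquad u = 0 \text{ on }\partial\Omega.
$$

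This immediately gives surjectivity of $G$, and yields a bounded linear solution operator $T : L^p(\Omega) \to W^{2,p}(\Omega)\cap W^{1,p}_0(\Omega)$ with $G\circ T = \mathrm{id}_{L^p(\Omega)}$.

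For the splitting, I would set $E := W^{2,p}(\Omega)\cap W^{1,p}_0(\Omega)$ and consider the bounded linear operator
$$
P := T\circ G : W^{2,p}(\Omega)\longrightarrow W^{2,p}(\Omega),
$$
whose range is contained in $E$. Since $G$ restricted to $E$ is a bijection onto $L^p(\Omega)$ with inverse $T$, we have $P|_E = \mathrm{id}_E$, so $P^2 = P$. Hence $P$ is a continuous projection of $W^{2,p}(\Omega)$ onto the closed subspace $E$. Its kernel coincides with $N(G)$: if $Pu = 0$ then $T(Gu) = 0$, and applying $G$ gives $Gu = 0$. Conversely, $Gu = 0$ clearly implies $Pu = 0$. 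This produces the topological direct sum decomposition
$$
W^{2,p}(\Omega) = N(G) \oplus E,
$$
showing that $N(G)$ splits $W^{2,p}(\Omega)$.

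The only nontrivial ingredient is the black-box $W^{2,p}$ solvability of the Dirichlet problem with merely continuous leading coefficients; once that is granted, everything is soft linear algebra wrapped around the bounded operator $T$. I would not expect any genuine obstacle here.
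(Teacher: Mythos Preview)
Your proof is correct and follows essentially the same route as the paper. The paper cites \cite[Theorem 9.15]{GT} directly for the unique $W^{2,p}$ solvability of the Dirichlet problem, defines the same solution operator $T$, and then simply invokes the general fact that a surjective bounded operator admitting a bounded right inverse has a complemented kernel; you have unpacked this last step by explicitly constructing the projection $P = T\circ G$ and verifying $\ker P = N(G)$ and $\ran P = E$.
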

\begin{proof}
Under the above hypotheses on the coefficients, and for any $f\in L^p(\O)$,
the Dirichlet problem
\begin{equation}
\begin{split}\label{lemma4-1} 
Lu &= f
\\
u &\in W^{1,p}_0(\O) 
\end{split}
\end{equation}
has a unique solution $u\in W^{2,p}(\O)$, cf. \cite[Theorem 9.15]{GT}.
Hence $G$ is surjective.
\\
For each $f\in L^p(\O)$ denote by $Tf$ the solution $u$ of \eqref{lemma4-1}.
Then 
$T : L^p(\O) \to W^{2,p}(\O)$ is bounded. And obviously it is a right inverse of $G$.
Since $G$ is surjective and admits a bounded right inverse, 
we conclude that $N(G)$ splits $W^{2,p}(\O)$.
\end{proof}

Similarly, this time using Schauder theory, one proves the following lemma:
\begin{lemma}\label{split}
Assume $A$, $B$, $C\in C^{0, \a}(\o\O)$, assume that $C\leq 0$ and define 
$G : C^{2,\a}(\o\O)\to C^{0, \a}(\o\O)$ by $Gu = Lu$. Then $G$ is surjective and $N(G)$ splits $C^{2,\a}(\o\O)$.
\end{lemma}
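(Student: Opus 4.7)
The plan is to mimic the proof of Lemma \ref{splitwp}, replacing the $L^p$ solvability of the Dirichlet problem with its Schauder counterpart.

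First I would consider, for given $f\in C^{0,\a}(\o\O)$, the Dirichlet problem
\[
 Lu = f\mbox{ in }\O,\quad u = 0\mbox{ on }\d\O.
\]
Since $\O$ is a bounded $C^{2,\a}$ domain, the coefficients $A$, $B$, $C$ lie in $C^{0,\a}(\o\O)$, $A$ is strictly elliptic and $C\leq 0$, Schauder theory (e.g.\ \cite[Theorem 6.14]{GT}) asserts that this problem has a unique solution $u\in C^{2,\a}(\o\O)$, with a Schauder estimate $\|u\|_{C^{2,\a}}\leq K\|f\|_{C^{0,\a}}$. This immediately gives surjectivity of $G$ and simultaneously provides a bounded linear right inverse $T:C^{0,\a}(\o\O)\to C^{2,\a}(\o\O)$ of $G$, namely $Tf := u$.

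Given such a bounded right inverse, the splitting is a soft consequence. I would define $P := I - TG : C^{2,\a}(\o\O)\to C^{2,\a}(\o\O)$; this is bounded, and one checks directly that it is a projection: $P^2 = (I-TG)(I-TG) = I - 2TG + TGTG = I - TG = P$, since $GT = \mathrm{id}$. Its image is contained in $N(G)$ because $GP = G - GTG = 0$, while conversely any $u\in N(G)$ satisfies $Pu = u$. Therefore $N(G) = \ran P$ is the image of a bounded projection, and its closed complement is $\ker P = \ran T$. Hence $N(G)$ splits $C^{2,\a}(\o\O)$.

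I do not expect any real obstacle here beyond locating the precise Schauder solvability theorem with the right hypotheses; the rest is pure functional analysis already used in Lemma \ref{splitwp}. The assumption $C\leq 0$ is used only to ensure uniqueness (hence solvability by Fredholm alternative) of the Dirichlet problem, exactly as in the $W^{2,p}$ case.
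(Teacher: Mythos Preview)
Your proposal is correct and is exactly the argument the paper has in mind: it explicitly says ``Similarly, this time using Schauder theory, one proves the following lemma,'' i.e., repeat the proof of Lemma~\ref{splitwp} with \cite[Theorem~6.14]{GT} in place of \cite[Theorem~9.15]{GT}. Your extra paragraph spelling out why a bounded right inverse yields a splitting (via the projection $P=I-TG$) just makes explicit the general fact invoked at the end of the proof of Lemma~\ref{splitwp}.
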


\begin{lemma}\label{dense}
Assume $A$, $B$, $C\in C^{0, \a}(\o\O)$, assume that $C\leq 0$ and define 
$G : W^{2,2}(\O)\to L^2(\O)$ by $Gu = Lu$. Then $C^{2,\a}(\o\O)\cap N(G)$ is
strongly $W^{2,2}$-dense in $N(G)$.
\end{lemma}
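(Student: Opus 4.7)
The plan is to take any $u \in N(G)$, approximate it in $W^{2,2}(\O)$ by smooth functions $\tilde u_n \in C^\infty(\o\O)$, and then correct the non-vanishing defect $L\tilde u_n$ by a Dirichlet solve with zero boundary data. The defect will be small in $L^2$ while being H\"older continuous, so the correction can simultaneously be small in $W^{2,2}$ and lie in $C^{2,\a}(\o\O)$; subtracting it from $\tilde u_n$ will then yield the desired approximation inside $N(G) \cap C^{2,\a}(\o\O)$.

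Concretely, first I would pick $\tilde u_n \in C^\infty(\o\O)$ with $\tilde u_n \to u$ strongly in $W^{2,2}(\O)$; such approximations exist because $\partial\O$ is smooth. Setting $f_n := L\tilde u_n$, the boundedness of $L : W^{2,2}(\O) \to L^2(\O)$ gives $f_n \to Lu = 0$ in $L^2(\O)$, while the H\"older regularity of the coefficients of $L$ yields $f_n \in C^{0,\a}(\o\O)$. Exploiting $C \leq 0$, I would then invoke \cite[Theorem 9.15]{GT} to solve $Lv = f_n$ in $\O$ with zero Dirichlet data uniquely in $W^{2,2}(\O) \cap W^{1,2}_0(\O)$, and apply global Schauder regularity (cf.\ \cite[Theorem 6.14]{GT}) to upgrade the resulting $v_n$ to $C^{2,\a}(\o\O)$. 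Finally I would set $u_n := \tilde u_n - v_n$, which lies in $C^{2,\a}(\o\O)$ and satisfies $Lu_n = 0$.

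Convergence $u_n \to u$ in $W^{2,2}$ then reduces to $v_n \to 0$ in $W^{2,2}$, which is immediate from the a priori estimate $\|v_n\|_{W^{2,2}} \leq C\|f_n\|_{L^2}$ contained in \cite[Theorem 9.15]{GT} combined with $f_n \to 0$. The argument is essentially soft; the one small point of attention is that the $W^{2,2}$- and $C^{2,\a}$-Dirichlet solutions must be identified, and this follows from uniqueness in $W^{2,2}(\O) \cap W^{1,2}_0(\O)$ granted by $C \leq 0$. I do not anticipate any serious obstacle beyond correctly invoking these classical elliptic results.
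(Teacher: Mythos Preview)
Your proposal is correct and follows essentially the same argument as the paper: approximate $u$ by smooth (or $C^{2,\a}$) functions, correct the defect $L\tilde u_n$ by solving a Dirichlet problem with zero boundary data, and use the $W^{2,2}$ a priori estimate to show the correction vanishes in the limit. The only cosmetic differences are that the paper approximates directly by $C^{2,\a}$ functions and writes the correction with the opposite sign; your extra remark about identifying the $W^{2,2}$- and $C^{2,\a}$-solutions via uniqueness is a point the paper leaves implicit.
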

\begin{proof}
Let $u\in N(G)$ and let $u_n\in C^{2,\a}(\o\O)$ be such that $u_n\to u$ in $W^{2,2}(\O)$.
Then by continuity
\begin{equation}
\label{dense-1}
Lu_n \to 0 \mbox{ in }L^2(\O).
\end{equation}
As in the proof of Lemma \ref{splitwp}, there exists a unique solution $\rho_n\in (W^{2,2}\cap W^{1,2}_0)(\O)$ of 
$$
L\rho_n = -Lu_n \mbox{ in }\O.
$$
Moreover, $\rho_n\in C^{2,\a}(\o\O)$ because $Lu_n\in C^{0,\a}(\o\O)$, and
$\rho_n\to 0$ in $W^{2,2}(\O)$ by \eqref{dense-1}. Thus $u_n + \rho_n\in N(G)\cap C^{2,\a}(\o\O)$
and $u_n + \rho_n \to u$ in $W^{2,2}(\O)$.
\end{proof}

\subsection{Proofs of the main results}

\begin{proof}[ of Proposition \ref{manif}]
Let $m\in\NN$ and set $X = C^{2,\a}(\o S)$ and $Y = C^{0, \a}(\o S)$. 
We must show that the map $G : X\to Y$ defined by
$$
G(u) = \det\D^2 u - k
$$
satisfies the hypotheses of Lemma \ref{lju}
with $M$ given by \eqref{lju-1}.
But of course $G$ is in $C^m$,
because it is quadratic. More precisely, for all $h\in X$ we have
$$
G(v + h) = G(v) + \cof\D^2 v : \D^2 h + \det\D^2 h.
$$
Since
$$
\|\det\D^2 h\|_{C^{0, \a}} \leq \|\D^2 h\|^2_{C^{0, \a}} \leq \|h\|^2_{C^{2, \a}},
$$
we have
$$
G'(v)h = \cof\D^2 v : \D^2 h \mbox{ for all }h\in X.
$$
Next we claim that $G'(v) : X\to Y$ is surjective for all $v\in M$. But since $\det\D^2 v = k$,
by the assumptions on $k$
we see that $v$ is either strictly convex or concave. Hence $\cof\D^2 v$ is strictly elliptic.
So Lemma \ref{split} shows that $G'(v)$ is surjective, and that $N\left( G'(v) \right)$ splits $X$.
\end{proof}

\begin{lemma}\label{lestat} 
Assume that $k > 0$ on $\o S$. If $v\in C^{2,\a}_k(\o S)$ is stationary for $\W_k$, then $v$ is formally stationary for $\W_k$.
\end{lemma}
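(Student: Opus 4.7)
The plan is to take an arbitrary $h \in W^{2,2}(S)$ satisfying the linearised constraint $\cof \D^2 v : \D^2 h = 0$ a.e.\ in $S$, and to construct an admissible variation $u(t) \in W^{2,2}_k(S)$ with $u(0) = v$ and $u'(0) = h$; stationarity of $v$ will then immediately yield $\int_S \D^2 v : \D^2 h = 0$. The basic tension is that the manifold structure on $C^{2,\a}_k(\o S)$ from Proposition \ref{manif} only manufactures variations along $C^{2,\a}$-tangent vectors, whereas the definition of formally stationary tests against all of $W^{2,2}$. Bridging this regularity gap is the main obstacle, and Lemma \ref{dense} is tailor-made to handle it.

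First I would exploit the ellipticity package: since $k>0$ and $v \in C^{2,\a}_k(\o S)$, the function $v$ is strictly convex or concave, so $\cof \D^2 v \in C^{0,\a}(\o S)$ is strictly elliptic. The operator $L w := \cof \D^2 v : \D^2 w$ (with $B = 0$, $C = 0$) therefore satisfies the hypotheses of Lemma \ref{dense}. This gives a sequence $h_n \in C^{2,\a}(\o S)$ with $\cof \D^2 v : \D^2 h_n = 0$ and $h_n \to h$ strongly in $W^{2,2}(S)$.

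Next, for each $h_n$ I would build a smooth-in-$t$ path in $W^{2,2}_k(S)$ using the manifold structure. By Proposition \ref{manif} and the identification $T_v M = N\left( G'(v) \right)$ from Lemma \ref{lju}, the function $h_n$ is a tangent vector at $v$ to $M = C^{2,\a}_k(\o S)$. The chart $\p$ from Lemma \ref{lju} then yields, for all sufficiently small $t$, a $C^1$-path $u_n(t) := \p(t h_n) \in M$ with $u_n(0) = v$ and with derivative $h_n$ in $C^{2,\a}(\o S)$. Because the inclusion $C^{2,\a}(\o S) \embed W^{2,2}(S)$ is continuous, the path $u_n$ is strongly $W^{2,2}$-continuous, takes values in $W^{2,2}_k(S)$, and has $u_n'(0) = h_n$ in the $W^{2,2}$-sense. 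Stationarity of $v$ applied to $u_n$, together with the expansion $\D^2 u_n(t) = \D^2 v + t \D^2 h_n + o(t)$ (a routine consequence of $\p(\eta) = v + \eta + o(\|\eta\|_{C^{2,\a}})$), produces $\int_S \D^2 v : \D^2 h_n = 0$.

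Finally I would pass to the limit $n \to \infty$: since $\D^2 v \in L^\infty(S)$ and $\D^2 h_n \to \D^2 h$ in $L^2(S)$, the identity survives and gives $\int_S \D^2 v : \D^2 h = 0$, i.e.\ formal stationarity. As indicated above, the pivotal step is the density result Lemma \ref{dense}, which is exactly what allows the smooth manifold machinery of Lemma \ref{lju} to deliver a variation in the direction of an arbitrary $W^{2,2}$ test function in the kernel of the linearised constraint.
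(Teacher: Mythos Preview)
Your proposal is correct and follows essentially the same approach as the paper: use the chart $\p$ from Proposition~\ref{manif} (Lemma~\ref{lju}) to produce admissible variations along $C^{2,\a}$-kernel directions, obtain $\int_S \D^2 v : \D^2 h_n = 0$ from stationarity, and then invoke Lemma~\ref{dense} together with the $W^{2,2}$-continuity of $h\mapsto \int_S \D^2 v : \D^2 h$ to upgrade to arbitrary $h\in W^{2,2}$ with $\cof\D^2 v : \D^2 h = 0$. The paper packages the same argument slightly more abstractly via the functional $F'(v)$, but the substance is identical.
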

\begin{proof}
Define $G : W^{2,2}(S) \to L^2(S)$ by $Gu = \cof\D^2 v : \D^2 u$ and $\t G : C^{2,\a}(\o S)\to C^{0, \a}(\o S)$ by $\t G u = \cof\D^2 v : \D^2 u$,
and define $F : W^{2,2}(S)\to \R$ by $F(v) = \int_S |\D^2 v|^2$.
\\
Since $F$ is continuously Fr\'echet differentiable, the fact that $v\in C_k^{2,\a}(\o S)$
is stationary for $\W_k$, combined with Proposition \ref{manif}
(in particular with the existence of $\p$ as in the conclusion of Lemma \ref{lju}), 
implies that
\begin{equation}
\label{lestat-1} 
F'(v)h = 0 \mbox{ for all }h\in N(\t G).
\end{equation}
Now $N(\t G) = N(G)\cap C^{2,\a}(\o S)$. Lemma \ref{dense}
implies that $N(\t G)$ is strongly $W^{2,2}$-dense in $N(G)$. Thus by continuity of $F'(v)$,
formula \eqref{lestat-1} is in fact equivalent to
$$
F'(v)h = 0 \mbox{ for all }h\in N(G).
$$
And this means that $v$ is formally stationary for $\W_k$.
\end{proof}

For formal stationary points we use the basic Lagrange multiplier rule to prove the following lemma:
\begin{lemma}\label{lemma3} 
Assume that $k > 0$ on $\o S$. If $v\in C^2_k(\o S)$ is formally stationary for $\W_k$, then there
exists a Lagrange multiplier $\la\in L^2(S)$ such that \eqref{EL} holds.
\end{lemma}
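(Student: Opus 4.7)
The plan is to apply the basic Lagrange multiplier rule of Lemma \ref{basiclagrange} on the Hilbert spaces $X = W^{2,2}(S)$ and $Y = L^2(S)$, with the bounded linear maps
$$
G : X\to Y, \quad Gh = \cof\D^2 v : \D^2 h, \qquad F : X\to\R, \quad Fh = \int_S \D^2 v : \D^2 h.
$$
By definition, the hypothesis that $v$ be formally stationary for $\W_k$ is precisely the statement $Fh = 0$ for every $h\in N(G)$. To apply Lemma \ref{basiclagrange} it therefore suffices to check that $R(G)$ is closed in $Y$, and I would deduce this by exhibiting $G$ as a strictly elliptic operator and appealing to Lemma \ref{splitwp}.

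Since $k > 0$ on $\o S$ and $\det\D^2 v = k$, the matrix $\D^2 v(x)$ is strictly definite at every $x\in\o S$; changing the sign of $v$ if necessary, we may assume $v$ is convex, so that $\cof\D^2 v$ is strictly positive definite and, by $v\in C^2(\o S)$, continuous on $\o S$. Then $G$ has the form $Lh = A : \D^2 h$ with $A = \cof\D^2 v\in C^0(\o S)$ strictly elliptic and with vanishing first- and zeroth-order coefficients (in particular the sign hypothesis $C\leq 0$ is trivially satisfied). Lemma \ref{splitwp} applied with $p = 2$ therefore gives that $G : X\to Y$ is surjective; in particular $R(G) = Y$ is closed.

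Lemma \ref{basiclagrange} now produces $\Lambda\in Y' = L^2(S)'$ with $F = \Lambda\circ G$. The Riesz representation theorem identifies $\Lambda$ with some $-\la\in L^2(S)$, so that
$$
\int_S \D^2 v : \D^2 h + \int_S \la\,\cof\D^2 v : \D^2 h = 0\quad\mbox{for all }h\in W^{2,2}(S).
$$
To recast this as \eqref{EL}, pick any $\phi\in C_c^{\infty}(\R^2)$; its restriction to $S$ belongs to $W^{2,2}(S)$, and by definition of the distributional divergence one has $\langle\div\div(\chi_S M),\phi\rangle = \int_S M : \D^2\phi$ for any $M\in L^1(S, \R^{2\times 2}_{\sym})$. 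Applied to $M = \D^2 v + \la\cof\D^2 v$ this yields \eqref{EL}.

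There is essentially no hard step in the argument; the only point that requires thought is choosing the function spaces so that Lemma \ref{splitwp} delivers closed range for $G$, and this is entirely handled by the uniform ellipticity of $\cof\D^2 v$ that the assumption $k > 0$ provides.
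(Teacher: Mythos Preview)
Your proof is correct and follows essentially the same approach as the paper: define $F$ and $G$ on $W^{2,2}(S)$ and $L^2(S)$, use Lemma~\ref{splitwp} (via the strict ellipticity of $\cof\D^2 v$ coming from $k>0$) to get surjectivity of $G$, and then invoke Lemma~\ref{basiclagrange}. Your version is in fact slightly more complete than the paper's, since you make explicit the sign normalisation ensuring $\cof\D^2 v$ is positive definite and you spell out the passage from the integral identity to the distributional form~\eqref{EL}.
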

\begin{proof}
Define $F : W^{2,2}(S)\to\R$ by $Fh = \int_S \D^2 v : \D^2 h$ and
$G : W^{2,2}(S)\to L^2(S)$ by $Gh = \cof\D^2 v : \D^2 h$. 
By hypothesis we know $Fh = 0$ for all $h\in N(G)$. Since $\cof\D^2 v\in C^0(\o S)$,
we can apply Lemma \ref{splitwp} to see that $G$ is surjective.
Now Lemma \ref{basiclagrange} implies that there exists a unique $\Lambda$ in the dual of $L^2(S)$
such that $F = \Lambda\circ G$, i.e., there exists $\la\in L^2(S)$ such that
$$
\int_S \D^2 v : \D^2 h = \int_S \la \cof\D^2 v : \D^2 h \mbox{ for all }h\in W^{2,2}(S).
$$
\end{proof}

\begin{proof}[ of Proposition \ref{multiplier}]
Combine Lemma \ref{lemma3} with
Lemma \ref{lestat}, and observe that \eqref{EL} implies that
$$
\div\div \left( \la \cof\D^2 v : \D^2 h \right) = \Delta^2 v \mbox{ in }\DD'(S),
$$
which by standard elliptic regularity proves
that $\la\in C^{\infty}(S)$ because $v\in C^{\infty}(S)$.
\end{proof}

\section{The case of constant $k$}

For $A\in\R^{2\times 2}_{\sym}$ we have $|A|^2 = (\Tr A)^2 - 2\det A$
and
$
|A|^2 = 2|A^{\circ}|^2 + 2\det A,
$
where $A^{\circ} = A - \frac{1}{2}(\Tr A) I$ denotes the trace-free part of $A$.
Thus
\begin{equation*}
|\D^2 v|^2 = (\Delta v)^2 - 2k
\end{equation*} 
and
\begin{equation*}
|\D^2 v|^2 = 2|\D^2 v - \frac{1}{2}\Delta v\ I|^2 + 2k.
\end{equation*} 
And so
$$
\W_k(v) = \int_S \left( (\Delta v)^2 - 2k \right) = 
2\int_S \left(|\D^2 v - \frac{1}{2}\Delta v\ I|^2 + k \right).
$$
So an absolute minimum of $\W_k$ is attained
if $\D^2 v - \frac{1}{2}\Delta v\ I$
vanishes identically, i.e.,
$$
\D^2 v = 
\begin{cases}
\sqrt{k}I &\mbox{ if }k \geq 0
\\
\sqrt{|k|}\diag\left(1, -1 \right) &\mbox{ if }k < 0.
\end{cases}
$$
If $k$ is constant then this is the case for
$$
v(x) =
\begin{cases}
\frac{\sqrt{k}}{2} |x|^2 &\mbox{ if }k\geq 0
\\
\frac{\sqrt{|k|}}{2} (x_1^2 - x_2^2) &\mbox{ if }k < 0.
\end{cases}
$$

{\bf Remarks.}
\begin{enumerate}[(i)]
\item The above computations are standard in the context of surfaces, cf. 
\cite{KS-annals}. With obvious changes, these arguments also apply to the case
of isometric immersions when the Gauss curvature of the reference metric is constant,
cf. \cite{H-general}.
\item When the energy density $Q_2$ is not isotropic, then one can still argue similarly,
and one obtains solutions with two unequal constant principal curvatures.
\item A nontrivial problem for the case $k = 0$ results if one
imposes boundary conditions or includes force terms.
This situation is covered by the results in \cite{H-cpam, H-cpde}.
\\
Indeed, the problem addressed there was to study minimisers of the Willmore functional 
$$
u\mapsto\int_S |A|^2
$$
among all $W^{2,2}$ isometric immersions $u$ of $(S, \delta)$ into $\R^3$, where $\delta$ denotes the standard flat metric in $\R^2$
and $A$ denotes the second fundamental form of $u$.
But by the Gauss-Codazzi-Mainardi equations
$A$ is a possible second fundamental form for such an isometric immersion $u$ if and only if
$$
A\in\{\D^2 v : v\in W^{2,2}(S)\mbox{ with }\det\D^2 v = 0\mbox{ a.e. in }S\}.
$$
Related to this, if $u\in W^{2,2}(S, \R^3)$,
then $u$ is an isometric immersion if and only if the function $v = u\cdot e$ 
satisfies the Darboux equation $\det\D^2 v = 0$ for any constant $e\in\SS^2$.
\end{enumerate}

\vspace{1cm}

\def\cprime{$'$}

\end{document}